\newcommand{\RN}[1]{%
\textup{\uppercase\expandafter{\romannumeral#1}}%
}
\newcommand{\norm}[1]{\left\lVert{#1}\right\rVert}
\newtheorem{theorem}{Theorem}
\title{\LARGE \bf
A geometric approach to the dynamics of flapping wing micro aerial vehicles: Modelling and reduction
}
\author{S. Kadam$^{1}$, S. Gajbhiye$^{1}$ and R. N. Banavar$^{1}$
\thanks{$^{1}$
S Kadam, S Gajbhiye, R N Banavar are with Systems and Control Engineering Department, Indian Institute of Technology Bombay,
        Mumbai, 400076, India
        {\tt\small sudin,sneha@sc.iitb.ac.in, banavar@iitb.ac.in}}%
}
\begin{document}
\maketitle
\thispagestyle{empty}
\pagestyle{empty}
\begin{abstract}
This paper presents a geometric framework for analysis of dynamics of flapping wing micro aerial vehicles (FWMAV) which achieve locomotion in the special Euclidean group SE(3) using internal shape changes. We review the special structure of the configuration manifold of such systems. This work addresses to extend the work in geometric locomotion to the aerial locomotion problem. Furthermore, there seems to be limited work in modelling of flapping wing bodies in a geometric framework. We derive the dynamic model of the FWMAV using Lagrangian reduction theory defined on symmetry groups. The reduction is achieved by applying Hamilton's variation principle on a reduced Lagrangian. The resultant dynamics is governed by the Euler-Poincar\'{e} and Euler-Lagrange equations.

\end{abstract}

\section{INTRODUCTION}

Locomotion of robots or living organisms relates to body movements that results in progression from one place to another in a physical space. The locomotion of articulated mechanical systems is often complex, even when considered with the aid of reduction principles. Geometric mechanics finds a very important position in analysis of robotic locomotion \cite{Bloch book}. For a large class of locomotion systems, including underwater vehicles, spacecraft with rotors and wheeled or legged robots, it is possible to model the motion of the system using the geometric phase associated with a connection on a principal bundle  \cite{liang}, \cite{cabrera}, \cite{marsden_krishna_bloch}. Understanding the locomotion of animals and robots involves nonlinear dynamics and the coordination of multiple limbs. A commonality in most of the approaches in locomotion is the periodic variation in limbs or shape variable to achieve the macro-motion. This idea of using periodic driving signals to produce macro-movement appears in a number of settings as explained in \cite{liang}, \cite{cabrera}, \cite{Fairchild beanie}, \cite{Chung dorothy}, \cite{Kuang robobat}.

A flapping wing vehicle is an aerial body like a bird or an insect that uses periodic wing motions for thrust as well as lift generation. The idea of aerial robotic locomotion is to understand and imitate the flapping-wing flight of birds, insects etc. Agility, vertical take-off and landing (VTOL) ability distinguishes it from conventional aircraft, making it suitable for various applications. Work in this area has been a topic of research since long, and many systems are engineered and commercially used which mimic natural aerial locomotion \cite{Taha thesis}, \cite{paranjape review paper}. Survey papers \cite{paranjape review paper}, \cite{Taha review} highlight that bio-inspired flight is receiving attention in the aerospace and robotics research community. A few engineering bird-like MAVs have been commercially developed such as Festo SmartBird and the Aerovironment Hummingbird in the recent years \cite{paranjape review paper}.

\begin{figure}[h!]
\includegraphics[scale=0.5]{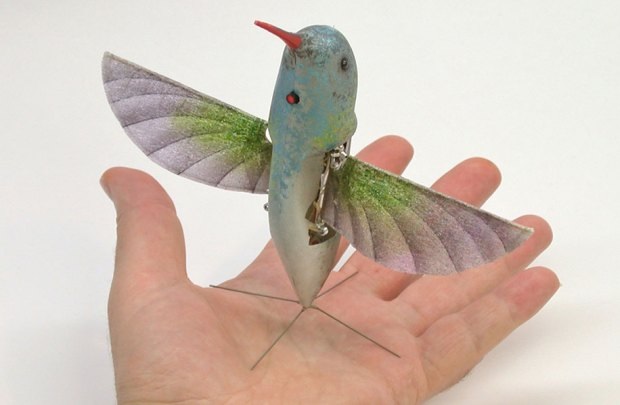} 
\label{hummingbird}
\caption{Hummingbird}
\end{figure}

From the literature on the topic it is noticed that the modelling of dynamics of the FWMAV is done using analytical mechanics and local parametrizations, and the particular geometric structure for locomoting bodies is not utilized in the setting of FWMAVs. It appears that power of tools from geometric mechanics and control remains to be applied to the aerial locomotion problem. The purpose of this paper is to explore the application of tools from geometric mechanics and control to a problem on a more general structure group - $SE(3)$ to the FWMAV. Here we refer to \cite{Morgansen UUV} which presents geometric methods for modelling and control of free-swimming fin-actuated underwater vehicles. This seems to be one such work analyzing geometry of locomotion on $SE(3)$, although for a hydrodynamic environment. Capturing the aerodynamics of a FWMAV poses lot of complexities resulting from nonlinear, unsteady effects, to list a few. There are many models proposed of various degrees of complexity in the literature \cite{Chung dorothy},\cite{Taha thesis},\cite{Taha Hajj},\cite{Kuang robobat},\cite{paranjape review paper}. In the present work we focus on the modelling and analysis of rigid body dynamic aspect of the system along with qualitative inclusion of the aerodynamics forces, rather than delving into the source of aerodynamic effects and their mathematical forms.

While studying locomotion using internal shape change, the topology of the configuration space needs attention \cite{kelly murray}. The configuration space is divided into two parts. One part describes the configuration of the internal shape variables of the mechanism, referred to as the shape or base manifold $B$. The other part is the macro-position of the robot, a Lie group $G$, representing displacement of the body coordinate frame with respect to the reference frame. The total configuration space of the robot, $Q$ is defined by both $G$ and $B$. Such systems follow the topology of a trivial principal fiber bundle, defined mathematically in Appendix A. The fibers over each point of the base correspond to the different values of the group variables. The base manifold $B$ is the quotient (Appendix B) of the entire configuration manifold $Q$ with the group variables as the equivalence class. This paper approaches the locomotion problem for FWMAV in such a setting. The Fig. \ref{fiber_bundle} shows a schematic of a fiber bundle \cite{wolfram_website}. We refer to \cite{Gallier}, \cite{Wilson} for more details on the topology of locomoting systems .

\begin{figure}[h!]
\centering
\includegraphics[scale=0.35]{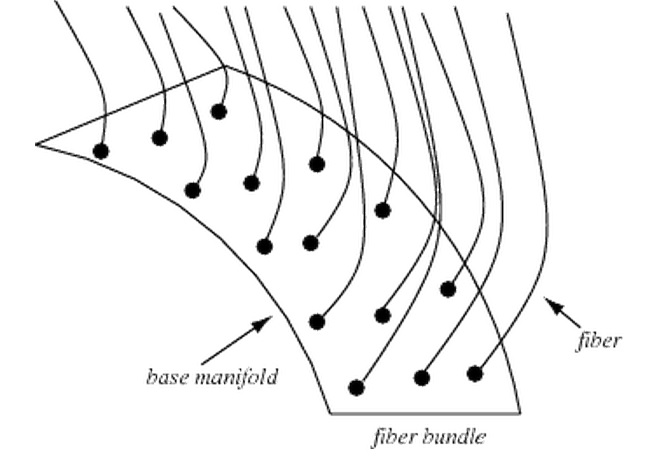} 
\caption{Fiber Bundle}
\label{fiber_bundle}
\end{figure}

In this paper we use the ideas of geometric mechanics to derive the equations of motion for the FWMAV. While some classical approaches like Euler-Lagrange equation lead to equations of motion, the geometric mechanics framework helps to understand the structure and geometric properties of the system. The classical procedure for deriving dynamics is by Hamilton's variations and the resultant equations of motion are Euler-Lagrange equation. Continuous symmetries are a vital aspect of geometric mechanics since they facilitate in lowering the dimension of the system by constructing the corresponding reduced system. In the geometric approach we consider the configuration space as a Lie group structure and identify some symmetry in the system which leads to reduced dynamics. If the Lagrangian is invariant under the Lie group $G$, the group is termed as \textit{symmetry} group. Exploiting the symmetries in the system leads to dynamical model on reduced space, which is generally the shape space for a locomotion problem. The reduction is based on identifying configuration space as Lie group and factoring the dependence of symmetry, then applying Hamilton's variation principle which leads to Euler-Poincar\'{e} equations on reduced space. The relevant work on the reduction procedure are discussed in \cite{marsden_krishna_bloch}, \cite{ostrowski}, \cite{cendra}. Reduction becomes powerful in the context of control theory as mostly locomotion systems have full control of the shape variables. This approach also seems well suited for studying issues of controllability and choice of gait as detailed in \cite{kelly murray}. Sometime the $G$-invariance is broken by gravity or by nonholonomic constraint, leaving symmetry with respect to a subgroup of $G$. Hence, the resultant equations of motion are Euler-Poincar\'{e} with an advected parameter \cite{cendra}, \cite{holm}, \cite{gajbhiye_banavar}. In this paper we construct the dynamics using Lagrangian reduction theory defined on a symmetry group.

The rest of the paper is organized as follow. In the following section we explain the geometry of the wing motion followed by definition of the configuration space. In section $\RN{3}$ we derive the expression for the Lagrangian of the system and explain the Lie group symmetry for FWMAV, followed by the derivation of reduced dynamic equations including the aerodynamic forces. Section $\RN{4}$ concludes the paper and lays out future scope of work.

\section{Flapping Winged vehicle}
A flapping wing vehicle uses complex wing motions to generate aerodynamic lift and thrust to manoeuvre in 3 dimensional space. There can be multiple variants of a  flapping wing mechanism. In the following part we explain the geometry of the FWMAV analysed in this work.

\subsection{Description}
Fig. \ref{fig:Flapping winged vehicle schematic} shows a schematic of the MAV whose dynamics is modelled in this paper. The body is interchangeably referred to as the torso. Two rigid wing are joined to the body at its center of mass such that wings can have rotation about any axis passing through this joint. The wing actuation occurs through torques applied at these two joints.

\vspace{10pt}
\begin{figure}[!htb]
\centering
\includegraphics[scale=.35]{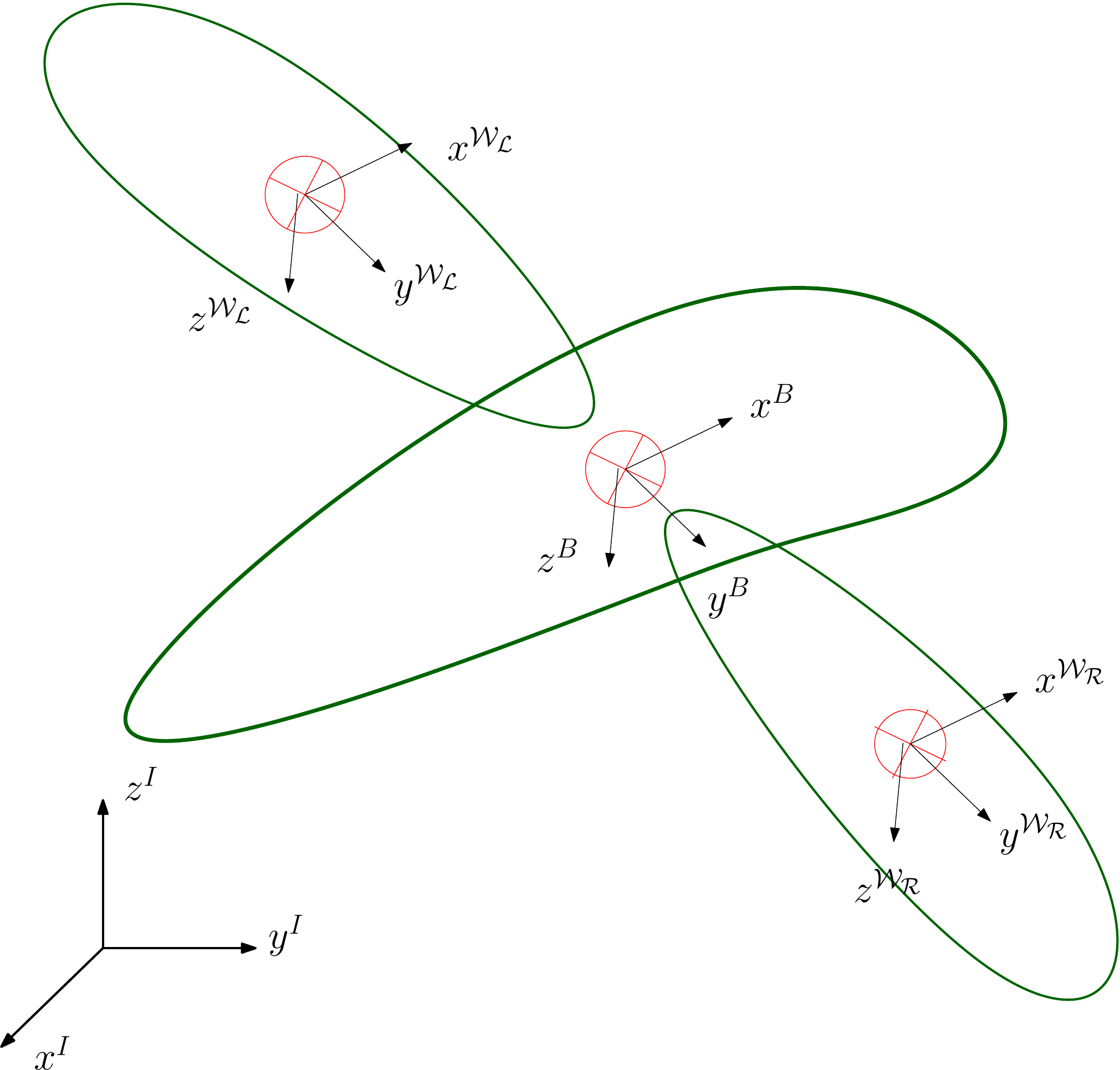}
\caption{Flapping winged vehicle schematic}
\label{fig:Flapping winged vehicle schematic}
\end{figure}

We shall refer to following four coordinate frames in modelling the dynamics of the rigid winged FWMAV - Inertial frame $\{x^\mathcal{I},y^\mathcal{I},z^\mathcal{I}\}$, body frame $\{x^\mathcal{B},y^\mathcal{B},z^\mathcal{B}\}$, left wing frame $\{x^\mathcal{W_L},y^\mathcal{W_L},z^\mathcal{W_L}\}$ and right wing frame $\{x^\mathcal{W_R},y^\mathcal{W_R},z^\mathcal{W_R}\}$. The origin of body and wing frames are at the respective centers of masses. The vehicle's longitudinal axis pointing forward is defined to be body x axis $x^\mathcal{B}$. The $y^\mathcal{B}$ axis points to the right of the body when looking in the direction of $x^\mathcal{B}$, and the $z^\mathcal{B}$ axis completes the right-handed frame. Wings' motion is of course subjected to the constraints by physical space which restrict allowable positions. For example, either of the wings can not rotate about $x^\mathcal{B}$ axis by $180$ degree, due to physical obstruction by the torso. 

\subsection{Configuration space}
Since the motion of the FWMAV is, in general, in 3 dimensional space, it can be represented using the special Euclidean Lie group $SE(3)$, which shall be the structure group for this problem. The shape variables in this case correspond to actuated degrees of freedom of 2 wings, each of whose motion evolves on a subset of  $SO(3)$. Thus we impart a principal fiber bundle topology to this configuration space. The notation is as follows:
\\ \\
\begin{tabularx}{\linewidth}{>{$}r<{$} @{${}\qquad{}$} X}
x_{\mathcal{B}},x_{\mathcal{W_L}},x_{\mathcal{W_R}} & Vectors from the origin of inertial frame to points on body, left wing and right wing, respectively, represented in inertial frame; \\
X_{\mathcal{B}},X_{\mathcal{W_L}},X_{\mathcal{W_R}} & Vectors from the origin of body frame, left wing frame, right wing frame to points on body, left wing and right wing, respectively, represented in respective frames; \\
r^\mathcal{B}_{\mathcal{BI}},r^\mathcal{I}_{\mathcal{BI}} & Vectors from the origin of inertial frame to the origin of body frame (torso of FMWAV), represented in body and inertial frames respectively; \\
r^\mathcal{I}_{\mathcal{W_L B}},r^\mathcal{W_L}_{\mathcal{W_L B}} & Vectors from the origin of body frame to the origin of left wing frame, represented in inertial and left frames respectively; \\
R_\mathcal{BI},R_\mathcal{W_LB} & Orientation of body frame with respect to inertial frame and left wing frame with respect to body frame; \\
m_\mathcal{B},m_\mathcal{W_L},m_\mathcal{W_R} & Mass of body (torso), left wing and right wing respectively; \\
I_\mathcal{B},I_\mathcal{W_L},I_\mathcal{W_R} & Moment of inertia of body (torso), left wing and right wing respectively; \\
\omega_{\mathcal{B}},\omega_{\mathcal{W_L}},\omega_{\mathcal{W_R}} & Angular velocities of body, left wing and right wing respectively with respect to inertial frame, represented in the respective frames; \\
\bar{h}_L, \bar{h}_R & The vector from the center of mass of body to the center of mass of left and right wings respectively; \\
\rho & The density of the body and wings, assumed to be constant across the entire FWMAV.
\end{tabularx}

\subsection{Lagrangian Computation}

We compute the Lagrangian $L$ for FWMAV in this section, given as the difference between kinetic energy $T$ and potential energy $V$

\begin{equation}\label{lagrangian}
L=T-V
\end{equation}

We compute the kinetic energy as the summation of that corresponding to body $(\mathcal{B})$, left wing $(\mathcal{W_L})$ and right wing $(\mathcal{W_R})$ of the FWMAV as follows.

\begin{align}
T&=T_\mathcal{B}+T_\mathcal{\mathcal{W_L}}+T_\mathcal{\mathcal{W_R}} \nonumber \\
&=\int_\mathcal{B} \norm{\dot{x}_\mathcal{B}}^2\rho \, d^3X_{\mathcal{B}} + \int_\mathcal{W_L} \norm{\dot{x}_\mathcal{W_L}}^2\rho \,  \mathrm d^3X_{\mathcal{W_L}} \nonumber \\ & \:\:\:\: + \int_\mathcal{W_R} \norm{\dot{x}_\mathcal{W_R}}^2\rho \, \mathrm d^3X_{\mathcal{W_R}} \nonumber
\end{align}
We now use the fact that a point on a body can be written with reference to the center of mass of the body. Also, we put to use the assumption that the wings are hinged at the center of mass of the torso. Hence,
\begin{align*}
&x_\mathcal{B}=r_\mathcal{BI}^{\mathcal{I}}+R_\mathcal{BI}X_{\mathcal{B}},~ \: x_\mathcal{W_L}=r_\mathcal{BI}^{\mathcal{I}}+R_\mathcal{BI}R_\mathcal{W_L B}X_{\mathcal{W_L}} ,\\
& x_\mathcal{W_R}=r_\mathcal{BI}^{\mathcal{I}}+R_\mathcal{BI}R_\mathcal{W_R B}X_{\mathcal{W_R}}
\end{align*}
We also use the fact that the body vector is constant in the body frame, hence
\begin{align}
T&=\int_\mathcal{B} \norm{\dot{r}_\mathcal{BI}^{\mathcal{I}}+\dot{R}_\mathcal{BI}X_{\mathcal{B}}}^2\rho \, \mathrm d^3X_{\mathcal{B}} + \nonumber
\\ & \:\:\:\: +\int_\mathcal{W_L} \norm{\dot{r}_\mathcal{BI}^{\mathcal{I}}+\dot{\overline{R_\mathcal{BI}R_\mathcal{W_L B}}X_\mathcal{W_L}}}^2\rho \, \mathrm d^3X_{\mathcal{W_L}} \nonumber
\\ & \:\:\:\: + \int_\mathcal{W_R} \norm{\dot{r}_\mathcal{BI}^{\mathcal{I}}+\dot{\overline{R_\mathcal{BI} R_\mathcal{W_R B}}}X_\mathcal{W_R}}^2\rho \, \mathrm d^3X_{\mathcal{W_R}} \label{kinetic}
\end{align}
The potential energy is also summation of that for body and both the wings,
\begin{equation}\nonumber
V=V_\mathcal{B}+V_\mathcal{W_L}+V_\mathcal{W_R}
\end{equation}
The potential energies of body, left wing and right wing are obtained as
\begin{align*}
&V_\mathcal{B}=m_\mathcal{B}g r^\mathcal{I}_{\mathcal{B}\mathcal{I}} \cdot z^{\mathcal{I}},~ V_\mathcal{W_L}=m_\mathcal{W_L}g r^\mathcal{I}_{\mathcal{W_L}\mathcal{I}} \cdot z^{\mathcal{I}},\\
&V_\mathcal{W_R}=m_\mathcal{W_R}g r^\mathcal{I}_{\mathcal{W_R}\mathcal{I}} \cdot z^{\mathcal{I}}
\end{align*}
where $g$ be the gravity and $z^{\mathcal{I}}$ be the unit vector in $z^{\mathcal{I}}$-direction, $r^\mathcal{I}_{\mathcal{W_L}}$ can be decomposed as 
\begin{align*}
r^\mathcal{I}_{\mathcal{W_L}\mathcal{I}} &= r^\mathcal{I}_{\mathcal{W_L}\mathcal{B}} + r^\mathcal{I}_{\mathcal{B}\mathcal{I}} \\
&= R_{\mathcal{BI}} r^\mathcal{B}_{\mathcal{W_L}\mathcal{B}} + r^\mathcal{I}_{\mathcal{B}\mathcal{I}}
= R_{\mathcal{B}\mathcal{I}} R_{\mathcal{W_LB}} \bar{h}_L + r^\mathcal{I}_{\mathcal{B}\mathcal{I}}.
\end{align*}
Hence the total potential energy is obtained as
\begin{align}\label{potential}
V&= m_{T}g r^\mathcal{I}_{\mathcal{B}\mathcal{I}} \cdot z^{I} + m_\mathcal{W_L}g(R_{\mathcal{B}\mathcal{I}} R_{\mathcal{W_LB}} \bar{h}_L) \cdot z^{\mathcal{I}} \nonumber \\
& \quad + m_\mathcal{W_R}g(R_{\mathcal{B}\mathcal{I}} R_{\mathcal{W_RB}} \bar{h}_R )\cdot z^{\mathcal{I}}
\end{align}
where $m_{T}= (m_\mathcal{B}+ m_{\mathcal{W_L}} + m_{\mathcal{W_R}})$. From (\ref{kinetic}) and (\ref{potential}) we have all the terms for finding Lagrangian in equation (\ref{lagrangian}).
\section{Group action, Lagrangian invariance and Equations of motion}
Symmetry is exploited to develop a dynamical model in a reduced space. A left action of a Lie group $G$ on manifold $Q$ is a mapping $\Phi : G \times Q \longrightarrow Q$ and the tangent lifted action by $T\Phi: G \times TQ \longrightarrow TQ $. Assuming the action is free and proper \cite{Marsden Ratiu intro to}, the Lagrangian $L$ is said to be invariant if it remains unchanged under the induced action of $G$ on $TQ$. The group under which $L$ is invariant is termed as the $symmetry$ group. To exploit symmetries in this problem, we cast the configuration space in the larger space, on the lines of \cite{holm} and then fix the gravitational vector for further analysis accordingly. Let configuration space $Q$ be the submanifold of the extended space $\tilde{Q} = Q \times \mathbb{R}^{3}$, so that the gravity vector $z^{\mathcal{I}}$ is considered as a new coordinate $p \in \mathbb{R}^{3}$. On this space define an extended Lagrangian $\tilde{L}$ be $\tilde{L}: T\tilde{Q} \longrightarrow \mathbb{R}$ which is invariant under the action $G$. Then the Lagrangian of the system is given by $L = \tilde{L}|_{p=z}$. The configuration space is identified as trivial fiber bundle with ($SO(3)$) as the bundle space or group space and ($ \mathbb{R}^{3} \times SO(3) \times SO(3)$) as the shape space. The left action of $G= SO(3)$ on $\tilde{Q}$ is given as
\begin{align*}
\Phi_{R_{1}}: (R_{\mathcal{BI}}, & r_{\mathcal{BI}}^\mathcal{I}, R_\mathcal{W_LB},R_\mathcal{W_RB},p) \\
& \longrightarrow (R_{1}R_{\mathcal{BI}}, R_{1}r_{\mathcal{BI}}, R_\mathcal{W_LB},R_\mathcal{W_RB},R_{1}p)
\end{align*}
and the tangent lifted action of $G$ is
\begin{align*}
T\Phi_{R_{1}}:  ( & \dot{R}_{\mathcal{BI}}, \dot{r}_{\mathcal{BI}}^\mathcal{I}, \dot{R}_\mathcal{W_LB},\dot{R}_\mathcal{W_RB},\dot{p}) \\
& \longrightarrow (R_{1}\dot{R}_{\mathcal{BI}}, R_{1}\dot{r}_{\mathcal{BI}}^\mathcal{I}, \dot{R}_\mathcal{W_LB},\dot{R}_\mathcal{W_RB},R_{1}\dot{p}).
\end{align*}
\begin{theorem}
The Lagrangian $L = \tilde{L}|_{p=z}$ is invariant under the group 
$$G_{z} = \{ R_{\mathcal{BI}} \in SO(3) | ~ R_{\mathcal{BI}}^{T}z^\mathcal{I} = z^\mathcal{I} \}.$$
\end{theorem}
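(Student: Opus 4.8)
The plan is to first lift the symmetry to the extended Lagrangian $\tilde L = T - \tilde V$, where $\tilde V$ is obtained from (\ref{potential}) by replacing the fixed gravity direction $z^\mathcal{I}$ everywhere with the advected coordinate $p$, and then to descend back to the physical Lagrangian by restricting to $p = z^\mathcal{I}$. The key observation driving the argument is that $\tilde L$ is invariant under the \emph{entire} group $SO(3)$, and that this full symmetry survives the restriction $p=z^\mathcal{I}$ only on the isotropy subgroup $G_z$.

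First I would verify that the kinetic energy $T$ in (\ref{kinetic}) is invariant under the tangent-lifted action $T\Phi_{R_1}$ for an arbitrary $R_1 \in SO(3)$. Under this action $\dot{r}_\mathcal{BI}^\mathcal{I} \mapsto R_1 \dot{r}_\mathcal{BI}^\mathcal{I}$ and $\dot{R}_\mathcal{BI} \mapsto R_1 \dot{R}_\mathcal{BI}$, while $R_\mathcal{W_LB}$, $R_\mathcal{W_RB}$ and their velocities are left fixed. Applying the product rule to $\overline{R_\mathcal{BI}R_\mathcal{W_LB}}$ then shows $\dot{\overline{R_\mathcal{BI}R_\mathcal{W_LB}}} \mapsto R_1 \dot{\overline{R_\mathcal{BI}R_\mathcal{W_LB}}}$ (and likewise for the right wing), so every integrand becomes $\norm{R_1(\,\cdot\,)}^2$. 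Since $R_1 \in SO(3)$ is orthogonal, hence norm-preserving, each integrand — and therefore $T$ — is unchanged.

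Next I would show that the extended potential $\tilde V$ is $SO(3)$-invariant. Here both the position-type vectors $r_\mathcal{BI}^\mathcal{I}$, $R_\mathcal{BI}R_\mathcal{W_LB}\bar{h}_L$, $R_\mathcal{BI}R_\mathcal{W_RB}\bar{h}_R$ and the gravity coordinate $p$ rotate together by $R_1$, so each term is an inner product of the form $(R_1 a)\cdot(R_1 p) = a \cdot p$, the equality holding because $R_1$ preserves the Euclidean inner product. Thus $\tilde V$, and hence $\tilde L = T - \tilde V$, is $SO(3)$-invariant. This is precisely the payoff of the extension: the physical potential $V$ in (\ref{potential}) is \emph{not} $SO(3)$-invariant, since a global rotation tilts the vehicle relative to gravity, but letting $z^\mathcal{I}$ co-rotate as the advected vector $p$ restores the symmetry.

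Finally I would restrict to $p = z^\mathcal{I}$. Evaluating the invariance identity $\tilde L(\Phi_{R_1}(\tilde q), T\Phi_{R_1}(\dot{\tilde q})) = \tilde L(\tilde q, \dot{\tilde q})$ at $p = z^\mathcal{I}$ sends the gravity coordinate to $R_1 z^\mathcal{I}$; to identify the result with $L = \tilde L|_{p=z}$ evaluated on the transformed state I need $R_1 z^\mathcal{I} = z^\mathcal{I}$, equivalently $R_1^T z^\mathcal{I} = z^\mathcal{I}$, which is exactly the defining condition of $G_z$. For such $R_1$ the surface $\{\, p = z^\mathcal{I} \,\}$ is preserved, and the $SO(3)$-invariance of $\tilde L$ descends to $G_z$-invariance of $L$. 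I expect the only real obstacle to be the bookkeeping in the kinetic step — correctly pushing $R_1$ through the time-differentiated product $\overline{R_\mathcal{BI}R_\mathcal{W_LB}}$ — rather than anything conceptual; the conceptual content is simply that the advected-parameter extension trades a non-invariant $V$ for an invariant $\tilde V$ at the price of shrinking the symmetry group to the stabilizer of $z^\mathcal{I}$.
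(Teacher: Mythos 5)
Your proposal is correct and follows essentially the same route as the paper's proof: both rest on the orthogonality of $R_{1}$ to preserve the kinetic-energy norms and on moving $R_{1}$ across the inner product in the potential terms so that the isotropy condition $R_{1}^{T}z^{\mathcal{I}}=z^{\mathcal{I}}$ finishes the argument. Your explicit two-step framing (full $SO(3)$-invariance of $\tilde{L}$ followed by restriction to $p=z^{\mathcal{I}}$) is just a cleaner organization of the same computation, since the paper already sets up the extended Lagrangian in the text preceding the theorem.
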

\begin{proof}
Under the left action of $G$, the Lagrangian $L$ in (\ref{lagrangian}) is given by
\begin{align}\nonumber
\begin{split}
L &= \int_\mathcal{B} \norm{R_{1}(\dot{r}_\mathcal{BI}^{\mathcal{I}}+\dot{R}_\mathcal{BI}X_{\mathcal{B}})}^2\rho \, \mathrm d^3 X_\mathcal{B} + \\
& \quad +\int_\mathcal{W_L} \norm{R_{1}(\dot{r}_\mathcal{BI}^{\mathcal{I}}+\dot{\overline{R_\mathcal{BI} R_\mathcal{W_L B}}}X_\mathcal{W_L})}^2\rho \, \mathrm d^3 X_\mathcal{W_L} \\
& \quad + \int_\mathcal{W_R} \norm{R_{1}(\dot{r}_\mathcal{BI}^{\mathcal{I}}+\dot{\overline{R_\mathcal{BI} R_\mathcal{W_R B}}}X_\mathcal{W_R})}^2\rho \, \mathrm d^{3}X_\mathcal{W_R} \\
& \quad + m_{T}g \langle R_{1}r_\mathcal{BI}^{\mathcal{I}}, z^{\mathcal{I}} \rangle + m_{\mathcal{W_L}}g\langle R_{1}R_{\mathcal{BI}}r_\mathcal{W_{L}B}^{\mathcal{B}}, z^{\mathcal{I}} \rangle \\
& \quad + m_{\mathcal{W_R}}g\langle R_{1}R_{\mathcal{BI}}r_\mathcal{W_{R}B}^{\mathcal{B}}, z^{\mathcal{I}} \rangle ,
 \end{split}
\end{align}
\begin{align}
\begin{split}
& = \int_\mathcal{B} \norm{\dot{r}_\mathcal{BI}^{\mathcal{I}}+\dot{R}_\mathcal{BI}X_{\mathcal{B}}}^2\rho \, \mathrm d^3X_{\mathcal{B}} + \\
& \quad +\int_\mathcal{W_L} \norm{\dot{r}_\mathcal{BI}^{\mathcal{I}}+ \dot{\overline{R_\mathcal{BI} R_\mathcal{W_L B}}}X_\mathcal{W_L}}^2\rho \, \mathrm d^3X_{\mathcal{W_L}} \\
& \quad + \int_\mathcal{W_R} \norm{\dot{r}_\mathcal{BI}^{\mathcal{I}}+ \dot{\overline{R_\mathcal{BI} R_\mathcal{W_R B}}} X_\mathcal{W_R}}^2\rho \, \mathrm d^3X_{\mathcal{W_R}} \\
& \quad + m_{T}g \langle r_\mathcal{BI}^{\mathcal{I}}, R_{1}^{T}z^{\mathcal{I}} \rangle + m_{\mathcal{W_L}}g\langle R_{\mathcal{BI}}r_\mathcal{W_{L}B}^{\mathcal{B}}, R_{1}^{T}z^{\mathcal{I}} \rangle \\
& \quad + m_{\mathcal{W_R}}g\langle R_{\mathcal{BI}}r_\mathcal{W_{R}B}^{\mathcal{B}}, R_{1}^{T}z^{\mathcal{I}} \rangle
\end{split}
\end{align}
where the fact that $R_{1}R_{1}^{T}=I$ has been used. Since $R_{1}^{T}z^{\mathcal{I}}=z^{\mathcal{I}}$, the Lagrangian remains the same.
\end{proof}
\textit{Remark}: The gravity breaks the full group symmetry, and hence leaving symmetry with respect to subgroup $G_{z}=SO(2)$, a subgroup of $SO(3)$. In other words, the system Lagrangian remain unchanged if we rotate about $z^{\mathcal{I}}$ axis. From the argument, we see that $(Q,L)$ can be constructed by $(\tilde{Q},\tilde{L})$.

When the Lagrangian $L$ is invariant under the group action the equations get reduced to the quotient space $TQ/G_{z}$. Let $\mathfrak{so}(3)$ be the Lie algebra of $G$ and $\mathcal{O}$ be the orbit space (Appendix C) $G/G_{z}$ of $z^{\mathcal{I}}$ in $\mathbb{R}^{3}$, then we have reduced the system from $T\tilde{Q}$ to $\mathfrak{so}(3) \times \mathcal{O} \times T\mathbb{R}^{3} \times TSO(3) \times TSO(3)$. 

Given a curve $(R_{\mathcal{BI}}(t), r_{\mathcal{BI}}^{\mathcal{I}}(t), R_{\mathcal{W_LB}}(t), R_{\mathcal{W_RB}}(t)) \in Q$, the $L$ is extracted from extended Lagrangian $\tilde{L}$ as
\begin{align*}\nonumber
&L( R_{\mathcal{BI}}, r_{\mathcal{BI}}^{\mathcal{I}}, R_{\mathcal{W_LB}}, R_{\mathcal{W_RB}},\dot{R}_{\mathcal{BI}},\dot{r}_{\mathcal{BI}}^{\mathcal{I}},\dot{R}_{\mathcal{W_LB}},\dot{R}_{\mathcal{W_RB}})\\
& = \tilde{L}(R_{\mathcal{BI}}, r_{\mathcal{BI}}^{\mathcal{I}}, R_{\mathcal{W_LB}}, R_{\mathcal{W_RB}},\dot{R}_{\mathcal{BI}},\dot{r}_{\mathcal{BI}}^{\mathcal{I}},\dot{R}_{\mathcal{W_LB}},\dot{R}_{\mathcal{W_RB}},z^{\mathcal{I}}).
\end{align*}
Define the body(torso)-coordinate as
\begin{equation}\label{body_coordinates}
r_{\mathcal{BI}}^{B}=R_{\mathcal{BI}}^{T}r^\mathcal{I}_{\mathcal{BI}}, \quad \widehat{\omega}_{\mathcal{B}}= R_{\mathcal{BI}}^{T}\dot{R}_{\mathcal{BI}}, \quad \Gamma = R_{\mathcal{BI}}^{T}z^{\mathcal{I}}
\end{equation}
where $\widehat{\omega}_{\mathcal{B}} \in \mathfrak{so}(3)$ is the angular velocity in torso-body-coordinate, $r_{\mathcal{BI}}^{B} \in \mathbb{R}^{3}$ is the position vector in torso-body frame and $\Gamma \in \mathbb{R}^{3}$ is called an advected vector \cite{cendra}. Then the reduced Lagrangian $l: (\mathfrak{so}(3) \times \mathcal{O} \times
T\mathbb{R}^{3} \times TSO(3) \times TSO(3)) \longrightarrow \mathbb{R}$ is given by
\begin{align*}
L&( R_{\mathcal{BI}}, r_{\mathcal{BI}}^{\mathcal{I}}, R_{\mathcal{W_LB}}, R_{\mathcal{W_RB}},\dot{R}_{\mathcal{BI}},\dot{r}_{\mathcal{BI}}^{\mathcal{I}},\dot{R}_{\mathcal{W_LB}},\dot{R}_{\mathcal{W_RB}})\\
&=l(e,R_{\mathcal{BI}}^{T}r_{\mathcal{BI}}^{\mathcal{I}}, R_{\mathcal{W_LB}}, R_{\mathcal{W_RB}},R_{\mathcal{BI}}^{T}\dot{R}_{\mathcal{BI}},\dot{r}_{\mathcal{BI}}^{\mathcal{I}},\dot{R}_{\mathcal{W_LB}},\\ & \qquad \dot{R}_{\mathcal{W_RB}},R_{\mathcal{BI}}^{T}z^{\mathcal{I}}),\\
& = l(r_{\mathcal{BI}}^{\mathcal{B}},R_{\mathcal{W_LB}}, R_{\mathcal{W_RB}},\widehat{\omega}_{\mathcal{B}},\dot{r}_{\mathcal{BI}}^{B},\dot{R}_{\mathcal{W_LB}},\dot{R}_{\mathcal{W_RB}},\Gamma).
\end{align*}
We use (\ref{body_coordinates}) and allude to Appendix D for kinetic energy calculations to get the following expression for the reduced Lagrangian.
\begin{align}\label{red_lag}
\begin{split}
l & =\frac{1}{2}m_{\mathcal{B}}\norm{\dot{r}_{\mathcal{BI}}^{\mathcal{B}}}^2+\frac{1}{2}\omega_{\mathcal{B}}^{T} I_{\mathcal{B}} \omega_{\mathcal{B}} + \frac{1}{2}m_{\mathcal{W_L}}\norm{\dot{r}_{\mathcal{BI}}^{\mathcal{B}}}^2\\
& \quad \quad + \frac{1}{2}\omega_{\mathcal{B}}^{T}(R_\mathcal{W_LB}I_{\mathcal{W_L}}R_\mathcal{W_LB}^T)\omega_{\mathcal{B}} + \frac{1}{2}\omega_{\mathcal{W_L}}^{T}I_{\mathcal{W_L}}\omega_{\mathcal{W_L}} \\
& \quad \quad + \omega_{\mathcal{B}}^{T} (R_\mathcal{W_LB}I_{\mathcal{W_L}} R_\mathcal{W_LB}) \omega_{\mathcal{W_L}} + \frac{1}{2}m_{\mathcal{W_R}}\norm{\dot{r}_{\mathcal{BI}}^{\mathcal{B}}}^2 \\
&\quad \quad +\frac{1}{2}\omega_{\mathcal{B}}^T(R_\mathcal{W_RB}I_{\mathcal{W_R}}R_\mathcal{W_RB}^T)\omega_{\mathcal{B}} + \frac{1}{2}\omega_{\mathcal{W_R}}^{T}I_{\mathcal{W_R}}\omega_{\mathcal{W_R}} \\
& \quad \quad + \omega_{\mathcal{B}} (R_\mathcal{W_RB}I_{\mathcal{W_R}} R_\mathcal{W_RB}) \omega_{\mathcal{W_R}} + V( r_{\mathcal{BI}}^{\mathcal{B}},\Gamma), \\
&= Z M Z^T + V( r_{\mathcal{BI}}^{\mathcal{B}},\Gamma)
\end{split}
\end{align}
\begin{align*}
\text{where,}\:\:\: & Z=\begin{bmatrix}
r_{\mathcal{BI}}^{\mathcal{B}} & \omega_{\mathcal{B}} & \omega_{\mathcal{W_L}} & \omega_{\mathcal{W_R}}
\end{bmatrix} \\
& V = m_{T}g \langle r_{\mathcal{BI}}^{\mathcal{B}}, \Gamma \rangle + m_{W_L}g \langle R_{\mathcal{W_LB}} \bar{h}_L ,\Gamma \rangle \\
& \quad \quad + m_{W_R}g \langle R_{\mathcal{W_RB}} \bar{h}_R ,\Gamma \rangle
\end{align*}
\begin{align}\nonumber
\begin{split}
M =\begin{bmatrix}
M_{1}\\
M_{2}\\
M_{3}\\
M_{4}
\end{bmatrix}=\begin{bmatrix}
m_{11} & m_{12} & m_{13} & m_{14} \\
m_{21} & m_{22} & m_{23} & m_{24} \\
m_{31} & m_{32} & m_{33} & m_{34} \\
m_{41} & m_{42} & m_{43} & m_{44} 
\end{bmatrix}
\end{split}
\end{align}
\begin{align}\nonumber
\begin{split}
& m_{11} = m_{T}I_{3 \times 3}; ~ m_{33} = I_{W_L}; ~ m_{44} = I_{W_R}, \\
& m_{22}=I_\mathcal{B} + R_\mathcal{W_LB}I_{W_L}R_\mathcal{W_LB}^T + R_\mathcal{W_RB}I_{W_R}R_\mathcal{W_RB}^T, \\
& m_{23} = R_\mathcal{W_LB}I_{W_L}R_\mathcal{W_LB}^T = m_{32},\\
& m_{24} = R_\mathcal{W_RB}I_{W_R}R_\mathcal{W_RB}^T = m_{42}.
\end{split}
\end{align}

The other terms are zero.

We now calculate the dynamic equation of the reduced Lagrangian $\mathit{l}$ using Hamilton's variational principle. Let $\delta$ denote the variational operator. The dynamic equations are given by solving the following equation -
\begin{equation}\nonumber
\delta \int \limits^{t_{1}}_{t_{0}} l(r_{\mathcal{BI}}^{\mathcal{B}},R_{\mathcal{W_LB}}, R_{\mathcal{W_RB}},\widehat{\omega}_{\mathcal{B}},\dot{r}_{\mathcal{BI}}^{B},\dot{R}_{\mathcal{W_LB}},\dot{R}_{\mathcal{W_RB}},\Gamma)\, \mathrm dt=0,
\end{equation}
where the curve $R_{\mathcal{BI}}(t)$ joins two fixed configurations in $SO(3)=G$ and the curves $r_{\mathcal{BI}}^{I}(t)$, $R_{\mathcal{W_LB}}(t)$ and $R_{\mathcal{W_RB}}(t)$ joins two fixed configurations in $\mathbb{R}^{3} \times SO(3) \times SO(3) = Q_{s}$ and hence covers the entire space $Q=G \times Q_{s}$. The variations $\delta R_{\mathcal{BI}}(t)$, $\delta r_{\mathcal{BI}}^{\mathcal{I}}(t)$, $\delta R_{\mathcal{W_LB}}(t)$ and $\delta R_{\mathcal{W_RB}}(t)$ are independent variations vanishing at the end points, hence,
$$\delta R_{\mathcal{BI}}(t_{0})= R_{\mathcal{BI}}(t_{1})=0,~~ \delta R_{\mathcal{W_LB}}(t_{0})=\delta R_{\mathcal{W_LB}}(t_{1})=0.$$ $$\delta R_{\mathcal{W_RB}}(t_{0})=\delta R_{\mathcal{W_RB}}(t_{1})=0~~\hbox{and}~~\delta r_{\mathcal{BI}}^{\mathcal{B}} (t_{0})= \delta r_{\mathcal{BI}}^{\mathcal{B}} (t_{1})=0,$$
So, the explicit variation of the integral is
\begin{align}
& \delta  \int \mathit{l}\, \mathrm dt = \int \left\{ \left\langle \frac{\partial \mathit{l}}{\partial \omega_{\mathcal{B}}},\delta \omega_{\mathcal{B}}\right\rangle + \left\langle  \frac{\partial \mathit{l}}{\partial r_{\mathcal{BI}}^{\mathcal{B}}},\delta r_{\mathcal{BI}}^{\mathcal{B}}\right\rangle + \left\langle  \frac{\partial \mathit{l}}{\partial \Gamma},\delta \Gamma \right\rangle  \right. \nonumber \\
&  + \left. \left\langle  \frac{\partial \mathit{l}}{\partial \dot{r}_{\mathcal{BI}}^{\mathcal{B}}},\delta \dot{r}_{\mathcal{BI}}^{\mathcal{B}}\right\rangle + \left\langle  \frac{\partial \mathit{l}}{\partial R_{\mathcal{W_LB}}},\delta R_{\mathcal{W_LB}} \right\rangle + \left\langle  \frac{\partial \mathit{l}}{\partial R_{\mathcal{W_RB}}},\delta R_{\mathcal{W_RB}} \right\rangle \right. \nonumber 
\\
&  + \left.  \left\langle  \frac{\partial \mathit{l}}{\partial \dot{R}_{\mathcal{W_LB}}},\delta \dot{R}_{\mathcal{W_LB}} \right\rangle + \left\langle  \frac{\partial \mathit{l}}{\partial \dot{R}_{\mathcal{W_RB}}},\delta \dot{R}_{\mathcal{W_RB}}\right\rangle  \right\} \, \mathrm dt. \label{variation}
\end{align}
To compute the variations of $\delta \omega_{\mathcal{B}},\delta r_{\mathcal{BI}}^{\mathcal{B}}$, $\delta \dot{r}_{\mathcal{BI}}^{\mathcal{B}}$ and $\delta \Gamma $, we proceed as follows. Let $\eta$ and $\bar{y}$ be the functions satisfying $\widehat{\eta}= R_{\mathcal{BI}}^{T}\delta R_{\mathcal{BI}}$ and $\bar{y}=R_{\mathcal{BI}}^{T} \delta r_{\mathcal{BI}}^{\mathcal{I}}$ respectively, vanishing at the end points. The variations $\widehat{\omega}_{\mathcal{B}}$, $\delta r_{\mathcal{BI}}^{\mathcal{B}}$, $\delta \dot{r}_{\mathcal{BI}}^{\mathcal{B}}$ and $\delta \Gamma$ are computed as
$$ \delta \omega_{\mathcal{B}} = \dot{\eta} + ad_{\omega_{\mathcal{B}}}\eta, ~~ \delta r_{\mathcal{BI}}^{\mathcal{B}} =-\widehat{\eta}r_{\mathcal{BI}}^{\mathcal{B}} +\bar{y}$$
$$\delta \dot{r}_{\mathcal{BI}}^{\mathcal{B}} =-\widehat{\eta}\dot{r}_{\mathcal{BI}}^{\mathcal{B}} +\widehat{\omega}_{\mathcal{B}}\bar{y} +\dot{\bar{y}} $$ and $\quad \delta \Gamma = -R_{\mathcal{BI}}^{-1}\delta R_{\mathcal{BI}}R_{\mathcal{BI}}^{-1} z^{\mathcal{I}}= -\widehat{\eta}\Gamma.$

By substituting these in equation (\ref{variation}), applying Hamilton's variational principle, we get following equations representing the dynamics of the FWMAV -
\begin{align}
& \frac{d}{dt}\left(\dfrac{\partial l}{\partial \dot{r}_{\mathcal{BI}}^{\mathcal{B}}}\right) - \frac{\partial l}{\partial r_{\mathcal{BI}}^{\mathcal{B}}} = \left(\frac{\partial l}{\partial \dot{r}_{\mathcal{BI}}^{\mathcal{B}}}\times \omega_{\mathcal{B}} \right) \nonumber \\
& \frac{d}{dt}  \left(\frac{\partial l}{ \partial \omega_{\mathcal{B}}}\right) - ad_{\omega_{\mathcal{B}}}^{\ast} \left(\frac{\partial l}{\partial \omega_{\mathcal{B}}} \right) =  \left(\frac{\partial l}{\partial \dot{r}_{\mathcal{BI}}^{\mathcal{B}}} \times \dot{r}_{\mathcal{BI}}^{\mathcal{B}} \right) + \left( r_{\mathcal{BI}}^{\mathcal{B}} \times \frac{\partial l}{\partial r_{\mathcal{BI}}^{\mathcal{B}}} \right) \nonumber \\
& \qquad \qquad\qquad\qquad\qquad -\left(\frac{\partial l}{\partial \Gamma} \times \Gamma \right) \nonumber \\
& \frac{d}{dt}\left(\frac{\partial l}{\partial \dot{R}_{\mathcal{W_LB}}} \right) - \frac{\partial l}{\partial R_{\mathcal{W_LB}}} = 0, \nonumber \\
& \frac{d}{dt}\left(\frac{\partial l}{\partial \dot{R}_{\mathcal{W_RB}}}\right) - \frac{\partial l}{\partial R_{\mathcal{W_RB}}} = 0 \label{dynamics}
\end{align}

We substitute the reduced Lagrangian in above equations and compute the derivatives. After inclusion of the aerodynamic and control forces we finally get the equations of motion for the FWMAV as follows

\begin{align}\label{final_dyn}
\begin{split}
& \frac{d}{dt} \left( M_{1} Z \right) =  (m_{B} + m_{W_L} + m_{W_R}) \dot{r}_{\mathcal{BI}}^{B}\times \omega_\mathcal{B} + F^{a}, \\
& \frac{d}{dt} \left( M_{2} Z \right) = M_{1}Z \times \omega_{\mathcal{B}} + m_{W_{L}}g \bar{h}_L \times \Gamma + m_{W_{R}}g \bar{h}_R \times \Gamma +T^{a}_\mathcal{B},\\
& \frac{d}{dt} \left( M_{3} Z \right) - \left( \frac{\partial l}{\partial R_{\mathcal{W_LB}}} \right) = T^{c}_\mathcal{W_L}+T^{a}_\mathcal{W_L}, \\
& \frac{d}{dt} \left( M_{4} Z \right) - \left( \frac{\partial l}{\partial R_{\mathcal{W_RB}}} \right) = T^{c}_\mathcal{W_R}+T^{a}_\mathcal{W_R}
\end{split}
\end{align}

where,
\vspace{5pt}
\\
\begin{tabularx}{\linewidth}{>{$}r<{$} @{${}\:\:\:{}$} X}
T^{a}_\mathcal{B},T^{a}_\mathcal{W_L}, T^{a}_\mathcal{W_R}  \: :& Aerodynamic torques acting on the body, left wing and right wing respectively; \\
F^{a} \: :& Total aerodynamic force acting on FWMAV; \\
T^{c}_\mathcal{W_L}, T^{c}_\mathcal{W_R} \: :&  Control torques applied on the left and right wings respectively, at their respective joints with the body.
\end{tabularx}
\\
\\
The first two equations of (\ref{final_dyn}) are the Euler-Poincar\'{e} equations whereas the last two are the Euler-Lagrange equations. The evolution of $\Gamma$ is calculated from the advected dynamics as
\begin{equation}\label{adv_dynamics}
\dot{\Gamma} = - \omega_{B} \times \Gamma.
\end{equation}
Using the solution $\omega_{\mathcal{B}}$ of the equation (\ref{final_dyn}), we can now find the curve $R_{\mathcal{BI}}(t)$ by reconstruction equation
\begin{equation}\label{reconstruct}
\dot{R}_{\mathcal{BI}}(t)=R_{\mathcal{BI}}(t) \widehat{\omega}_{\mathcal{B}} \mbox{   with } R_{\mathcal{BI}}(0) = R_{\mathcal{BI}_{0}}.
\end{equation}
Hence, equation (\ref{final_dyn}), (\ref{adv_dynamics}) together with the reconstruction equation (\ref{reconstruct}), give the complete dynamics of the FWMAV.
\section{CONCLUSIONS and FUTURE WORK}
This work is intended to explore the application of geometric mechanics to locomotion on $SE(3)$ for the FWMAV problem. The existing literature on the topic relies on traditional techniques based on classical mechanics and local parametrization, which tends to hide the underlying geometric structures of the system dynamics. This work has achieved to find the equations of motion in a coordinate a free setting.

In this paper we have incorporated the aerodynamic forces and moments without much regard to their form. It is known that accurate modelling of these forces renders accounting for complex phenomena. High-fidelity models are generally not amenable to analytical control theoretic methods. Instead, low-fidelity models, which although would not fully represent the physics, can help serve useful purpose. One is that it allows a description of the system in a control affine form in which the control inputs enter linearly. The reference \cite{Morgansen UUV} does such analysis for the hydrodynamic locomotion problem. It also gives controllability analysis for the underwater vehicle, using Lie bracket conditions. Such analysis can also be performed for the FWMAV. Feasibility of range of interesting manoeuvres exhibited by insects and MAVs can be studied through such a controllability analysis.

It also appears that in addition to the $SO(2)$ symmetry the FWMAV has $\mathbb{R}^2$ symmetry, which corresponds to the motion in the $x-y$ plane. Under the action of this $\mathbb{R}^2$ group the potential and kinetic energies are invariant. Hence a reduction under a semidirect product group $SE(2)= SO(2) \circledS \mathbb{R}^2$ can be explored.


\section*{APPENDIX}

\subsection{Trivial principal fiber bundle}\label{Trivial principal fiber bundle}
Let $Q$ be a manifold and $G$ a Lie group. A trivial principal fiber bundle with base $B$ and structure group $G$ is a manifold $Q = B \times G$ with a free left action of $G$ on $Q$ given by left translation in the group variable : $\phi_h(x,g) = (x,hg)$ for $x \in M$ and $g \in G$.

\subsection{Quotient space}\label{Quotient space}
The quotient space $X/ \sim$ of a topological space $X$ and an equivalence relation $\sim$ on $X$ is the set of equivalence classes of points in $X$, under the equivalence relation $\sim$.

\subsection{Orbit space}\label{Orbit space}
We identify $SO(3)/SO(2)$ as the orbit space $\mathcal{O}$ in $\mathbb{R}^{3}$ defined by
$$ \mathcal{O} = \{ y \in \mathbb{R}^{3} | y= gz \mbox{ for some } g\in  SO(3) \}. $$
If $\mathcal{O}$ is closed in $\mathbb{R}^{3}$, then $SO(3)/SO(2)$ is diffeomorphic to $\mathcal{O}$, a submanifold of $\mathbb{R}^{3}.$

\subsection{Explicit computation of equation (\ref{red_lag}) }\label{KE computation}
The reduced Lagrangian is given as,
\begin{align*}
l&  = \int_\mathcal{B}\norm{\dot{r}_{\mathcal{BI}}^{B}+\widehat{\omega}_\mathcal{B}X_{\mathcal{B}}}^2\rho \, \mathrm d^3X_{\mathcal{B}} \\
& \quad + \int_\mathcal{W_L} \norm{\dot{r}_\mathcal{BI}^{\mathcal{B}}+ (\widehat{\omega}_\mathcal{B} R_{\mathcal{W_LB}} + R_{\mathcal{W_LB}} \widehat{\omega}_{W_L}) X_\mathcal{W_L}}^2\rho \, \mathrm d^3X_{\mathcal{W_L}} \\
& \quad + \int_\mathcal{W_R} \norm{\dot{r}_\mathcal{BI}^{\mathcal{B}}+ (\widehat{\omega}_\mathcal{B} R_{\mathcal{W_LB}} + R_{\mathcal{W_RB}} \widehat{\omega}_{W_R}) X_\mathcal{W_R}}^2\rho \, \mathrm d^3X_{\mathcal{W_R}} \\
& \quad - V(r_{\mathcal{BI}}^{\mathcal{B}})
\end{align*}
We present the computation of integration for kinetic energy calculation of the body. 
\begin{align*}
\int_\mathcal{B} & \norm{\dot{r}_{\mathcal{BI}}^{B}+\widehat{\omega}_\mathcal{B}X_{\mathcal{B}}}^2\rho \,d^3X_{\mathcal{B}}\:
\\
&=\:\int_\mathcal{B}\left( \norm{\dot{r}_{\mathcal{BI}}^{B}}^2+\norm{\omega_\mathcal{B}\times X}^2 + 2 \, \langle \dot{r}_{\mathcal{BI}}^{B},\widehat{\omega}_\mathcal{B} X_{\mathcal{B}} \rangle \right)  \, \rho \, d^3X_{\mathcal{B}} \\
&=\norm{\dot{r}_{\mathcal{BI}}^{\mathcal{B}}}^2 \int_\mathcal{B}\rho \, d^3X_{\mathcal{B}} + \omega_\mathcal{B} \cdot \left( \int_\mathcal{B} \rho \hat{X}_{\mathcal{B}}^T \hat{X}_{\mathcal{B}} \, d^3X_{\mathcal{B}} \right)  \omega_\mathcal{B}
\\
&=m_\mathcal{B} \norm{\dot{r}_{\mathcal{BI}}^{\mathcal{B}}}^2 + (\omega_\mathcal{B})^TI_\mathcal{B}\omega_\mathcal{B}
\end{align*}
where, the body mass, body inertia and position of center of mass in body coordinates is given respectively as
\begin{align*}
&m_\mathcal{B}=\int_\mathcal{B} \rho \, d^3X_{\mathcal{B}}; \: I_\mathcal{B}=\int_\mathcal{B}\widehat{X}_{\mathcal{B}}^{T}\widehat{X}_{\mathcal{B}}\rho \, d^3X_{\mathcal{B}}
\end{align*}
The calculation for wings shall be similar, hence the reduced kinetic energy is given as
\begin{align*}
T_{red}&  = m_\mathcal{B} \norm{\dot{r}_{\mathcal{BI}}^{\mathcal{B}}}^2 + (\omega_\mathcal{B})^{T}I_\mathcal{B}\omega_\mathcal{B} + \frac{1}{2}m_{W_L}\norm{\dot{r}_{\mathcal{BI}}^{\mathcal{B}}}^2\\
& \quad \quad + \frac{1}{2}\omega_{\mathcal{B}}^{T}(R_{\mathcal{W_LB}}I_{W_L}R_{\mathcal{W_LB}}^T)\omega_{\mathcal{B}} + \frac{1}{2}\omega_{\mathcal{W_L}}^{T}I_{\mathcal{W_L}}\omega_{\mathcal{W_L}} \\
& \quad \quad + \omega_{\mathcal{B}}^{T} (R_{\mathcal{W_LB}}I_{W_L} R_{\mathcal{W_LB}}) \omega_{\mathcal{W_L}} + \frac{1}{2}m_{W_R}\norm{\dot{r}_{\mathcal{BI}}^{\mathcal{B}}}^2 \\
&\quad \quad +\frac{1}{2}\omega_{\mathcal{B}}^T(R_{\mathcal{W_RB}}I_{W_R}R_{\mathcal{W_RB}}^T)\omega_{\mathcal{B}} + \frac{1}{2}\omega_{W_R}^{T}I_{\mathcal{W_R}}\omega_{\mathcal{W_R}} \\
& \quad \quad + \omega_{\mathcal{B}} (R_{\mathcal{W_RB}}I_{W_L} R_{\mathcal{W_RB}}) \omega_{\mathcal{W_R}}
\end{align*}
where 
\begin{align*}
& m_\mathcal{W_L}=\int_\mathcal{B} \rho \, d^3X_{\mathcal{W_L}}; \:\:\: I_\mathcal{W_L}=\int_\mathcal{W_L}\widehat{X}_{\mathcal{W_L}}^T\widehat{X}_{\mathcal{W_L}}\rho \, d^3X_{\mathcal{W_L}}, \\
& m_\mathcal{W_R}=\int_\mathcal{W_R} \rho \, d^3X_{\mathcal{W_R}}; \:\:\: I_\mathcal{W_R}=\int_\mathcal{W_R}\widehat{X}_{\mathcal{W_R}}^T\widehat{X}_{\mathcal{W_R}}\rho \, d^3X_{\mathcal{W_R}}.
\end{align*}



\begin{thebibliography}{99}

\bibitem{kelly murray} S. D. Kelly, and R. M. Murray, Geometric phases and robotic locomotion, Journal of Robotic Systems, vol. 12, no. 6, pp. 417-431, 1995.

\bibitem{liang} Y. Liang, A Review of Geometry in Robotic Locomotion, Technical project, California Institute of Technology, 1996.

\bibitem{cabrera} A. Cabrera, Base-controlled mechanical systems and geometric phases, Journal of Geometry and Physics, vol. 58, no. 3, pp. 334-367, 2008.

\bibitem{Bullo project} F. Bullo, On controllability and symmetries in simple mechanical systems, Technical project, California Institute of Technology, 1996.

\bibitem{Fairchild beanie} S. D. Kelly, M. J. Fairchild, P. M. Hassing, and P. Tallapragada, Proportional heading control for planar navigation: The Chaplygin beanie and fishlike robotic swimming, Americal Control Conference, Canada, June 2012.

\bibitem{Chung dorothy} S. Chung, and  M. Dorothy, Neurobiologically inspired control of engineered flapping flight, Journal of guidance, control, and dynamics, vol. 33, no. 2, pp. 440-453, March-April 2010.

\bibitem{Taha thesis} H. Taha, Mechanics of Flapping Flight: Analytical Formulations of Unsteady Aerodynamics, Kinematic Optimization, PhD dissertation, Virginia Polytechnic Institute and State University, Oct. 2013.

\bibitem{Taha Hajj} H. Taha, M. R. Hajj, and P. S. Beran, State-space representation of the unsteady aerodynamics of flapping flight, Aerospace Science and Technology 34, 2014.

\bibitem{Kuang robobat} P. D. Kuang, M. Dorothy, and S. Chung, Robobat: Dynamics and control of a robotic bat flapping flying testbed, AIAA Infotech at Aerospace Conference, St. Louis, MO. March 2011.

\bibitem{paranjape review paper} A. Paranjape, M. Dorothy, S. Chung, and K. D. Lee, A flight mechanics-centric review of bird-scale flapping flight. International Journal of Aeronautical and Space Sciences, vol. 13, no. 3, pp. 267-281, 2012.

\bibitem{Taha review} H. E. Taha, M. R. Hajj, and H. N. Ali, Flight dynamics and control of flapping-wing MAVs: a review. Nonlinear Dynamics, vol. 70, no. 2, pp. 907-939, 2012.

\bibitem{Morgansen UUV} K. Morgansen, B. Triplett, and D. J. Klein, Geometric methods for modeling and control of free-swimming fin-actuated underwater vehicles. Robotics, IEEE Transactions on robotics, vol. 23, no. 6, pp. 1184-1199, 2007.

\bibitem{Bullo book} F. Bullo, Geometric control of mechanical systems. vol. 49. Springer Science \& Business Media, 2005.

\bibitem{Bloch book} A. M. Bloch, Nonholonomic mechanics and control. vol. 24. Springer Science \& Business Media, 2003.

\bibitem{Kobayashi} S. Kobayashi, and K. Nomizu, Foundations of differential geometry. vol. 1, Interscience Publishers, 1963.

\bibitem{Paranjape Chung tailless aircraft} A. Paranjape, S. J. Chung, and M. S. Selig, Flight mechanics of a tailless articulated wing aircraft, Bioinspiration \& Biomimetics, vol. 6, no. 2, 2011.

\bibitem{Marsden Ratiu intro to} J. E. Marsden, and T. Ratiu, Introduction to mechanics and symmetry: a basic exposition of classical mechanical systems. Vol. 17. Springer Science \& Business Media, 2013.

\bibitem{cendra} H. Cendra, D. D. Holm, J. E. Marsden, and T. S. Ratiu. Lagrangian reduction, the Euler-poincare equations, and semidirect products, American Mathematical Society Translation(2), volume~186, 1998.

\bibitem{marsden_krishna_bloch}
A. M. Bloch, P. S. Krishnaprasad, J. E. Marsden, and R. M. Murray, Nonholonomic Mechanical systems with symmetry,
Archive for Rational Mechanics and Analysis, Springer-Verlag, volume 136,1996.

\bibitem{holm}
D. Holm, T. Schmah, and C. Stoica, Geometric Mechanics and Symmetry, Oxford University Press Inc., New York, 2009.

\bibitem{gajbhiye_banavar}
S. Gajbhiye, and R. N. Banavar, Euler-Poincare Equations for a spherical robot actuated by a pendulum, Proceedings of 4th IFAC Workshop on Lagrangian and Hamiltonian methods for Non Linear Control, vol. 4, pp 72-77, 2012.

\bibitem{Gallier}
J. Gallier, Geometric methods and applications: for computer science and engineering. Vol. 38. Springer Science \& Business Media, 2011.

\bibitem{Wilson}
J. Wilson, Manifold Theory. Lecture notes. Workshop of Multiagent Pathfinding (WOMP), University of Chicago, 2012.

\bibitem{ostrowski}
J. Ostrowski, J. Burdick, The geometric mechanics of undulatory robotic locomotion, The international journal of robotics research, vol. 17, no. 7,  1998.

\bibitem{wolfram_website}
T. Rowland, Fiber Bundle. From MathWorld--A Wolfram Web Resource, created by Eric W. Weisstein., [online] http://mathworld.wolfram.com/FiberBundle.html. [Accessed : 20 - Aug - 2015]

\end{thebibliography}
\end{document}